\newcommand{\version}{\today}
\numberwithin{equation}{section}
\newcommand{\bdm}{\begin{displaymath}}
\newcommand{\edm}{\end{displaymath}}
\newcommand{\bdn}{\begin{eqnarray}}
\newcommand{\edn}{\end{eqnarray}}
\newcommand{\bay}{\begin{array}{c}}
\newcommand{\eay}{\end{array}}
\newcommand{\ben}{\begin{enumerate}}
\newcommand{\een}{\end{enumerate}}
\newcommand{\beq}{\begin{equation}}
\newcommand{\eeq}{\end{equation}}
\newcommand{\tx}{\textstyle}
\newcommand{\lf}{\left}
\newcommand{\ri}{\right}
\newcommand{\ket}[1]{\lf|#1 \ri\rangle}
\newcommand{\expval}[3]{\lf\langle #1 \ri|\lf.  #3 \ri. \lf|  #2 \ri\rangle}
\newcommand{\kett}[1]{\lf|#1 \ri)}
\newcommand{\expvalt}[3]{\lf( #1\lf| \lf. #3 \ri. \ri| #2 \ri)}
\newcommand{\xv}{\mathbf{x}}
\newcommand{\yv}{\mathbf{y}}
\newcommand{\kv}{\mathbf{k}}
\newcommand{\diff}{\mathrm{d}}
\newcommand{\bett}{\tilde{\beta}}
\newcommand{\eps}{\varepsilon}
\newcommand{\p}{\partial}
\newcommand{\tr}{\mathrm{Tr}}
\newcommand{\hamb}{\mathcal{H}}
\newcommand{\hambd}{\mathcal{H}^{D}}
\newcommand{\kb}{\mathcal{K}}
\newcommand{\latt}{\Lambda}
\newcommand{\spin}{\hat{S}}
\newcommand{\spinv}{\hat{\mathbf{S}}}
\newcommand{\tspinv}{\hat{\mathbf{S}}_T}
\newcommand{\fspinv}{\hat{\bm{\mathcal{S}}}}
\newcommand{\up}{a^{\dagger}}
\newcommand{\ax}{a_{\xv}}
\newcommand{\ak}{\tilde{a}_{\kv}}
\newcommand{\upx}{\up_{\xv}}
\newcommand{\upk}{\tilde{a}^{\dagger}_{\kv}}
\newcommand{\hn}{\hat{n}}
\newcommand{\nx}{n_{\xv}}
\newcommand{\hnx}{\hat{n}_{\xv}}
\newcommand{\tnk}{\tilde{n}_{\kv}}
\newcommand{\no}{\tilde{n}_{0}}
\newcommand{\hilb}{\mathscr{F}}
\newcommand{\Z}{\mathbb{Z}}
\newcommand{\HH}{\mathscr{H}}
\newcommand{\OO}{\mathcal{O}}
\newcommand{\ZZ}{\mathcal{Z}}
\newcommand{\one}{\mathbbm{1}}
\def\b{\beta}
\def\p{\pi}
\def\d{\delta}
\def\L{\Lambda}
\def\e{\varepsilon}
\def\kk{{\bf k}}
\def\xx{{\bf x}}
\def\yy{{\bf y}}
\def\be{\begin{equation}}
\def\ee{\end{equation}}
\def\bea{\begin{eqnarray}}
\def\eea{\end{eqnarray}}
\def\media#1{{\left\langle#1\right\rangle}}
\def\dpr{\partial}
\newtheorem{teo}{Theorem}[section]
\newtheorem{pro}{Proposition}[section]
\newcounter{remark}[section]
\newenvironment{rem}{\stepcounter{remark} \vspace{0,1cm} \noindent \textit{Remark \thesection.\theremark}\,}{\vspace{0,2cm}}
\begin{document}

\markboth{\scriptsize{Free Energy of the Heisenberg Model -- \textsc{Correggi, Giuliani} -- \version}}{\scriptsize{Free Energy of the Heisenberg Model -- \textsc{Correggi, Giuliani}  -- \version}}

\title{The Free Energy of the Quantum Heisenberg Ferromagnet \\ at Large Spin}
\author{M. Correggi, A. Giuliani	\\
	\mbox{}	\\
	\normalsize\it Dipartimento di Matematica, Universit\`{a} degli Studi Roma Tre,	\\
	\normalsize\it L.go S. Leonardo Murialdo 1, 00146, Rome, Italy.}	
\date{\version}

\maketitle
\abstract We consider the spin-$S$ ferromagnetic Heisenberg model in three dimensions, in the absence of 
an external field. Spin wave theory suggests that in a suitable temperature regime the system 
behaves effectively as a system of non-interacting bosons (magnons). 
We prove this fact at the level of the specific free energy: if $S\to\infty$ and the inverse temperature $\b\to 0$ in such 
a way that 
$\b S$ stays constant, we rigorously show that the free energy per unit volume 
converges to the one suggested by spin wave theory. 
The proof 
is based on the localization of the system in small boxes and on upper and lower bounds on the 
local free energy, and it also provides explicit error bounds on the remainder. 

\section{Introduction}

An important open problem in theoretical and mathematical physics is the proof 
of long range order in the three-dimensional (3D) quantum Heisenberg ferromagnet (FM) at low temperatures. 
While the existence of long range order at low temperatures in the 
classical Heisenberg model and in the quantum Heisenberg antiferromagnet can be proved by 
reflection positivity methods \cite{FSS,DLS}, the broken phase of the 
quantum ferromagnet eluded any rigorous treatment so far. 

From a heuristic point of view, a very useful and suggestive representation of the quantum FM 
is in terms of spin waves, an idea first introduced by Bloch in his seminal work \cite{B1,B2}. The spin waves are 
the lowest energy excitations, which give the dominant contribution to the free energy at low temperatures; they 
satisfy a Bose statistics and are in many respects the analogues of the phonons in crystals (see, e.g., \cite{K} for a 
classical and comprehensive review). 
Bloch's theory was later generalized in several directions by Herring and Kittel \cite{HK}, Holstein and Primakoff 
\cite{HP}, Dyson \cite{D}, and it was used, among other things, to compute a low temperature expansion for the 
spontaneous magnetization in zero external magnetic field: after a few erroneous attempts \cite{Kr, Op, Sch, VK}, 
Dyson's result \cite{D2} was confirmed by a number of different methods \cite{CH,KL,M,O,RL,Sz,VLP,W,YW,Z} and 
further extended, 
more recently, by the effective Lagrangian method \cite{Ho1,Ho2}. The conclusion is that at low energies the 
corrections 
to the simple Bloch's theory coming from the interactions among spin waves are so small that for most practical 
purposes the linear theory is enough, both in the presence or in the absence of an external magnetic field. 

While physically Bloch's theory is accepted and in good agreement with experiments, from a more mathematical 
point of view there is no confirmation of its correctness yet. It is fair to say that the current mathematical methods 
of quantum many body systems are still far from allowing us to prove the existence of a spontaneous magnetization
in the quantum Heisenberg ferromagnet and to possibly  confirm the exactness of Dyson's computation \cite{D2}.

An easier, but still unsolved problem is to prove the correctness of spin wave theory at the level of thermodynamic 
rather than correlation functions, at least in the large spin limit, where the quantum FM is (formally) close to 
its classical counterpart \cite{L} or to the free Bose gas \cite{HP,vHBW}, depending on the temperature regime 
which we look the system at. Some progress in this direction is due to Conlon and Solovej \cite{CS1}, who 
proved that the free energy of the quantum Heisenberg FM {\it in the presence of a large magnetic field} is the same 
as the one of 
a free Bose gas, asymptotically for large on-site spin $S$. For small (i.e., $1/2$) spin and small temperatures,  
Conlon and Solovej \cite{CS2} and Toth \cite{T} derived an upper bound on the free energy, close but not exactly 
equal to Bloch's theory prediction. 
In the large spin limit, it was unclear whether the rigorous justification of spin wave theory by Conlon and Solovej 
\cite{CS1} could be extended to the case of zero magnetic field, which is much more subtle and unstable with respect 
to temperature-induced excitations, due to the global rotational symmetry of the model. 
For the classical counterpart of this problem, that is the convergence of the large-S classical Heisenberg model 
to the gaussian one, Conlon and Solovej managed to show the uniformity of their results as the magnetic field is 
taken to zero \cite{CS3} (an even stronger convergence result to the gaussian model for the case of the classical rotator model is due to Bricmont et al \cite{BFLLS}). 
However, their proof did not apply to the quantum case, see the comments after \cite[Theorem 2.1]{CS3}. The large spin limit for the XXZ chain in the absence of a magnetic spin was studied by 
Michoel and Nachtergaele
\cite{MN1,MN2}, but their proof required the presence of an energy gap in the spectrum of the spin chain. 

In this paper, we solve the issue by extending these results to the 3D quantum rotationally invariant (gapless) case;
namely, we show that the free energy of the 3D quantum Heisenberg FM at temperatures of order $S$ 
converges as $S\to\infty$ to the one of the non-interacting magnon gas. This rigorously proves (in a weak sense)
the validity of the spin wave approximation at leading order in $S$ for $S$ large. 
We use the Holstein-Primakoff representation of the quantum Heisenberg FM in terms of interacting bosons, and we 
take advantage of the methods and the ideas recently developed by Lieb, Seiringer and Yngvason to treat 
the interacting Bose gas at low densities \cite{LS,LSY,LY} (see \cite{LSSY} for a review): key ingredients of our proof 
are: (i) a localization bound, allowing us to coarse grain the system in finite boxes; (ii) a priori bounds on the local 
energy, which make use of the energy gap in the box and allow us to drop the states with total spin far from the 
maximum. 

The paper is organized as follows. In Section \ref{model: sec} we define the model and state our main results. In 
Section \ref{bosons: sec} we review the Holstein-Primakoff representation and briefly review spin wave theory. In 
Section \ref{proofs: sec} we prove our main theorem by deriving suitable lower (Section \ref{proofs lb: sec}) and 
upper (Section \ref{proofs ub: sec}) bounds. 

\subsection{The Model and the Main Result}
\label{model: sec}

We consider a ferromagnetic Heisenberg model with nearest neighbor interactions which is associated with the 
Hamiltonian\footnote{We use the convention of distinguishing (when needed) between operators and numbers or 
eigenvalues by means of a $ \hat{} $ on top of the first ones. As usual we also denote vectors by bold letters.}
\beq
	\label{heisenberg ham 1}
	H_\L^{per} : = J \sum_{\media{\xv, \yv}\subset \Lambda}(S^2- \spinv_{\xv} \cdot \spinv_{\yv}),
\eeq
where $ J > 0 $, $\L$ is a cubic portion of side length $L$ of $\mathbb{Z}^3$ with periodic boundary conditions,
the sum runs over the nearest neighbor pairs in $\L$, and $ \spinv_{\xv} $ is a three-components 
spin $S$ operator. This means that the components of $\hat{\bf S}_\xx$, denoted by $\hat S^j_\xx$,
$j=1,2,3$, satisfy:
\be [\hat S^j_\xx,\hat S_\yy^k]=i\e_{jkl}\hat S^l_\xx\d_{\xx,\yy}\;, \qquad \hat{\bf S}_\xx^2=
(\hat S^1_\xx)^2+(\hat S^2_\xx)^2+(\hat S^3_\xx)^2=S(S+1)\;,\ee
where in the first equation $\e_{klm}$ is the completely antisymmetric symbol,
while in the second equation $S$ is the size of the spin, with $2S$ an integer. We shall denote 
by $ \HH_\L $ the Hilbert space of spin configurations in $\L$ such that $\hat{\bf S}_\xx^2=S_\xx(S_\xx+1)$
with $S_\xx=S$; a convenient basis for this space is  $\ket{\lf\{ S_{\xv}^3 \ri\}} : = \bigotimes_{\xv \in \latt} \ket{S_{\xv}^3}$, with $-S\le S^3_\xx\le S$, $\forall \xx\in \L$. Note that the Hamiltonian \eqref{heisenberg ham 1} is normalized in such a way that the ground state energy is zero.

Our main object of interest is the free-energy per site
\beq
	\label{free energy}
	f(S,\beta,\latt) : = - \frac{1}{\beta|\latt|} \log \ZZ(S,\beta,\latt)\;,\quad{\rm with}\quad 
	 \ZZ(S,\beta,\latt)= \tr_{\HH_{\L}} \lf( \exp \lf\{ - \beta H_\L^{per} \ri\} \ri),
\eeq
and the thermodynamic limit of the specific free energy, namely, $ f(S,\beta) : = \lim_{\latt \to \infty} f(S,\beta,\latt) $.
Our goal is to compute $f(S,\beta)$ to the leading order in $S$ for $S$ large in a suitable temperature regime (to be discussed below) and to prove that 
it coincides within explicitly estimated errors with the free energy of a free Bose gas. More precisely, our main result,
which generalizes a similar result in \cite[Theorem 4.2]{CS1} proven for a non-zero (in fact huge, i.e., $ h = \OO(S) $) magnetic field, 
can be stated as follows. 

	\begin{teo}[Free energy asymptotics]
		\label{free energy: teo}
		\mbox{}	\\
		Assume that $ \beta = \bett S^{-1}  $ as $ S \to \infty $ for some $ \bett > 0 $ constant
		independent of $S$.	Then one has
		\beq
			\label{free energy asympt}
			\frac{f(S,\beta)}{S} = \frac{1}{\tilde\b} \int\limits_{\mathcal{B}}\frac{\diff\kk}{(2\p)^3}\log \lf( 1 -e^{ - \bett J \eps(\kv)}\ri) + \OO\lf(S^{-1/8}(\log S)^{1/4}\ri),
		\eeq					
		where $\eps(\kv) : =  \sum_{i=1}^3(1-\cos k_i)$ is the dispersion relation and 
		$\mathcal{B}=[-\p,\p]^3$ is the first Brillouin zone.
	\end{teo}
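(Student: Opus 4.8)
\emph{Sketch of the strategy.}
The plan is to prove matching upper and lower bounds on $f(S,\beta)$, each agreeing with the right-hand side of \eqref{free energy asympt} up to the stated remainder. The basic tool is the Holstein--Primakoff representation, which replaces the spin operators at each site by bosonic operators $\ha_\xv,\hup_\xv$ with the hard constraint $\hnx=\hup_\xv\ha_\xv\le 2S$ and $\hat S^3_\xv=S-\hnx$. Substituting into \eqref{heisenberg ham 1} and expanding the square roots I would write $H_\L^{per}=\hot+\mathcal{R}$, where $\hot$ is the quadratic spin-wave Hamiltonian (diagonal in Fourier space, with dispersion proportional to $SJ\eps(\kv)$) and $\mathcal{R}$ gathers the density--density term and the subleading square-root corrections. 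The free energy of $\hot$, treated as a gas of non-interacting magnons and sent to the thermodynamic limit, reproduces exactly the integral in \eqref{free energy asympt}; the entire task is therefore to show that neither $\mathcal{R}$ nor the constraint $\hnx\le 2S$ affect the specific free energy beyond the claimed error in the regime $\beta S=\bett$ fixed.

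Both bounds are obtained after coarse-graining $\latt$ into cubic boxes of side $\ell$, with $\ell$ a slowly growing function of $S$ to be optimized at the end. For the lower bound on $f$ I would use that each bond term $S^2-\spinv_\xv\cdot\spinv_\yv$ in \eqref{heisenberg ham 1} is nonnegative: dropping the bonds that cross box boundaries gives $H_\L^{per}\ge\bigoplus_{\mathrm{boxes}}H^{\mathrm{free}}_{\latt_\ell}$, hence by the Gibbs variational principle $f(S,\beta,\latt)\ge f_{\latt_\ell}(S,\beta)$, the free energy of a single box with free boundary conditions. For the upper bound I would instead feed a product-over-boxes trial state into the Gibbs variational principle; now the boundary bonds contribute positively, but they number $O(1/\ell)$ per site and each carries energy $O(SJ)$, so the localization error is $O(SJ/\ell)$ per site, that is $O(1/\ell)$ once divided by $S$. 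The same $O(1/\ell)$ bounds the replacement of the Riemann sum over box momenta by the Brillouin-zone integral over $\mathcal{B}$.

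It then remains to compare the local free energy $f_{\latt_\ell}$ with that of the free magnon gas in the box, splitting $H_{\latt_\ell}$ into its quadratic free-magnon part and a remainder $\mathcal{R}_\ell$. The delicate point, around which the whole argument turns, is that rotational invariance makes the model gapless: the uniform mode and the softest modes $|\kv|\sim 1/\ell$ cost almost no energy and can be highly occupied, so Holstein--Primakoff cannot be controlled term by term there and the free-Bose-gas formula has an (integrable) singularity at $\kv=0$. Here I would invoke the a priori bound: the lowest nonzero eigenvalue of the box Hamiltonian is a genuine gap $\delta\sim SJ/\ell^{2}$, and weighing this energy cost against the entropy of the low-spin sectors shows that configurations whose total spin lies well below the maximal value $S\ell^{3}$ are thermally suppressed, so that the local occupations $\hnx$ may be cut off at a scale of order $\log S$ at the price of an exponentially small error. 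On the retained sector $\mathcal{R}_\ell$ is a true perturbation of the quadratic part: its expectation is estimated by Wick's theorem in the quasi-free comparison state for the upper bound, and bounded from below using the same control on $\hnx$ for the lower bound, both contributions being of lower order in $S$.

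Collecting everything, the specific free energy differs from the Brillouin-zone integral by the localization/discretization error $O(1/\ell)$, decreasing in $\ell$, against the truncation error, which grows with the box volume and carries the logarithmic factor coming from the occupation cutoff. Balancing the two fixes $\ell$ as a small power of $S$, of order $S^{1/8}(\log S)^{-1/4}$, and produces the remainder $\OO(S^{-1/8}(\log S)^{1/4})$. The hard part will be the lower bound: precisely because the system is gapless one cannot simply discard $\mathcal{R}_\ell$, and the soft and zero modes must be handled non-perturbatively, the box gap $\delta\sim SJ/\ell^{2}$ being the only quantitative lever available to separate the well-behaved near-maximal-spin sector from the rest.
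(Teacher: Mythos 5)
Your overall architecture coincides with the paper's: Holstein--Primakoff representation, coarse-graining into boxes of side $\ell$, positivity of the bond energy to decouple boxes for the lower bound, a product trial state for the upper bound, the box gap of order $SJ\ell^{-2}$ to suppress the low-total-spin sectors, an operator bound on the quartic remainder in terms of the total particle number, and the final optimization $\ell\sim S^{1/8}(\log S)^{-1/4}$. Two points in the lower bound are stated loosely but are repairable: the gap argument controls the \emph{total} occupation $\sum_\xx \hnx\lesssim \ell^5\log S$ in a box (per-site average of order $\ell^2\log S$), not the local occupations at scale $\log S$; and one must use rotation invariance to reduce each multiplet $\HH_{S_T,S_T^3}$ to a single $S_T^3$, at the cost of a harmless degeneracy factor $2S_T+1$. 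Neither affects the exponent.

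The genuine gap is in your upper bound. You propose to insert a product-over-boxes trial state and to estimate the remainder $\mathcal{R}_\ell$ ``by Wick's theorem in the quasi-free comparison state.'' If the comparison state is the Gibbs state of the quadratic magnon Hamiltonian in a box with free boundary conditions, that Hamiltonian has an exact zero mode ($\eps(\mathbf{0})=0$), so the quasi-free state is not normalizable on the zero-mode sector and $\langle\hat N^2\rangle$ diverges; Wick's theorem gives nothing. You correctly identify the softness of the low modes as the delicate point, and the gap argument you invoke works for the lower bound (where one only needs to \emph{restrict} a trace), but it does not by itself produce a legitimate trial state for the variational upper bound. The paper resolves this with two devices you omit: it freezes the spins on a corridor separating the boxes to the fully polarized state, which turns the localized Hamiltonians into Dirichlet ones with no zero mode and a soft-mode energy $\gtrsim SJ\ell^{-2}$; and it adds a fictitious magnetic field $h$ (legitimate since $h>0$ only lowers the partition function), which gives every mode a uniform gap $h/(SJ)$ and hence an exponential tail bound $e^{-c\tilde\beta h\bar N/S}$ on the probability of more than $\bar N$ particles, allowing the cutoff $\hat N\le\bar N$ to be installed and removed at negligible cost. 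Without some such regularization of the zero and soft modes, the step ``estimate $\mathcal{R}_\ell$ in the quasi-free state'' fails. Note also that the paper's upper bound, after optimizing $\bar N\sim S^{2/3}(\log S)^{-2/3}$, $\ell\sim S^{1/6}(\log S)^{-2/3}$, $h\sim S^{5/6}(\log S)^{2/3}$, yields an error $\OO(S^{-1/6}(\log S)^{2/3})$, which is smaller than the lower-bound error; the exponent $1/8$ in the theorem comes entirely from the lower bound, consistent with your balancing.
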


	\begin{rem}(Zero temperature limit)
		\\
		By a direct inspection of the proof, it is easy to verify that all the estimates are uniform in $ \bett > 0 $, if $\bett$ is well separated from $0$. Hence the result can be straightforwardly extended to any $ \bett=\bett(S)$ function of $ S $, provided that $\liminf_{S\to\infty}\bett(S)>0$, including the case that $\lim_{S\to\infty} \bett= \infty$. This last case for the spin-$1/2$ Heisenberg ferromagnet was considered in \cite{CS2} and \cite{T}, where only upper bounds on the free energy were proven.
	\end{rem}

	\begin{rem}(External magnetic field)
		\\
		The addition of an external magnetic field would not affect the result proven in the Theorem above. In fact, as already mentioned, the case of a magnetic field $ h $ of order $ \OO(S) $ was already considered in \cite{CS1}. However any magnetic field $ h \ll S $ would produce a contribution to the r.h.s. of \eqref{free energy asympt} of order $ h/S $ and thus would not affect the leading order term, irrespective of the sign of $ h $.
	\end{rem}

The intuition behind this result is based on
a well-known bosonic representation of the Heisenberg model, first proposed by Holstein and Primakoff in \cite{HP}, reviewed 
in the following section.

\subsection{Bose Gas Representation and Magnon Approximation}
\label{bosons: sec}

It is well known since the pioneering work Holstain-Primakoff \cite{HP} that the Heisenberg Hamiltonian 
can be rewritten in terms of suitable creation and annihilation operators, so that the excitations of the model can be described as Bose particles exactly as phonons in crystals. For any $ \xv \in \latt $ we set
\beq
	\label{ax upx}
	\spin_{\xv}^+ = : \sqrt{2S} \upx \sqrt{1 - \frac{\upx\ax}{2S}},		\qquad	\spin_{\xv}^{-} = : \sqrt{2S} \sqrt{1 - \frac{\upx\ax}{2S}} \ax,	\qquad	\spin_{\xv}^z = : \upx\ax - S,
\eeq
where $a^{\dagger}_{\xv}, a_\xv$ are bosonic creation and annihilation operators.  We can associate 
with such Bose modes an Hilbert space $ \mathscr{F}_\L$ isomorphic to the spin Hilbert space $ \HH_\L $ by 
diagonalizing the number of particles operator $\hnx : = \upx \ax$ and we denote such a basis by
$\kett{\{ \nx \}} : =  \bigotimes_{\xv \in \latt} \kett{ \nx}$.
Note that the constraint $ - S \leq S_{\xv}^z \leq S $ translates into the requirement 
\beq	\nx \leq 2S,\eeq
i.e., the Hilbert space is truncated and the occupation number per site can not exceed $ 2 S$. The 
isomorphism between the Hilbert spaces $ \HH_\L$ and $ \mathscr{F}_\L$ can be implemented by means of the 
one-to-one correspondence
$\kett{\nx} \longleftrightarrow \ket{S_{\xv}^3 = \nx-S}$,
so that, for instance, the ground state with all the spins pointing down is $ \kett{ \{ \nx = 0 \}} $.

The Hamiltonian $ H_\L^{per}$ can be rewritten in terms of the creation and annihilation operators as
\beq
 	\label{hambh}
	\mathcal{H}_\L^{per} =  JS\sum_{\substack{\xv, \yv \in \Lambda\\ |\!|\xv - \yv|\!|_\L= 1}} 
	\bigg\{ - \upx \bigg( 1 - \frac{\upx\ax}{2S} \bigg)^{1/2} \bigg( 1 - \frac{\up_{\yv}a_{\yv}}{2S} 
	\bigg)^{1/2} a_{\yv} +   \upx a_{\xv} -\frac1{2S} \upx \up_{\yv} \ax a_{\yv}\bigg\},
\eeq
where $|\!|\cdot|\!|$ is the euclidean distance on the torus $\L$. 
For $S$ large, 
this representation naturally leads to a decomposition of $\mathcal{H}_\L^{per}$ into ``free" and ``interacting" 
parts: 
\beq
	\label{hamob}
	\hamb_\L^{per} = \hamb_{0,\L}^{per} + \kb_\L^{per}, 	\qquad \hamb_{0,\L}^{per} : =   SJ \sum_{\substack{\xv, \yv \in \Lambda\\ 
	|\!|\xv - \yv|\!|_\L = 1}} \upx \lf( \ax - a_{\yv} \ri),
\eeq
where $ \kb_\L^{per}$ is at least quartic in the Bose operators and is formally of relative size $1/S$ with respect to the quadratic part. 
The magnon approximation simply amounts to drop the interaction term $ \kb_\L^{per}$ from the energy $ \hamb_\L^{per}$ and study the free Bose gas so obtained. By means of the Fourier transform the free Hamiltonian $ \hamb_{0,\L}^{per}$  can be easily diagonalized: given any $ \kv \in \latt^* $, where $ \latt^* $ denotes the dual lattice, i.e., 
	$
		\label{modes}
		\kv = \tx\frac{2\pi}{L} {\bf m}	
	$, with $ 0 \leq m_i < L $,
	we set $\eps(\kv) : =  \sum_{i=1}^3(1-\cos k_i)$
	and define the creation and annihilation operators for the Fourier modes as
	$\ak : = L^{-3/2} \sum_{\xv \in \latt} \exp \lf\{ i \kv \cdot \xv \ri\} \ax	$,
	so that the Hamiltonian \eqref{hamob} can be rewritten 
	$\hamb_{0,\L}^{per} = SJ \sum_{\kv \in \latt^*} \eps(\kv) \upk \ak $.
	
	The free energy of such a Bose gas should be computed by taking the trace of 
$ \exp \{ - \beta \hamb_{0,\L}^{per} \} $ on the truncated Hilbert space $ \mathscr{F}_\L$, where the occupation number of 
each site cannot exceed $ 2S $. However, in the large $ S $ limit one expects that such a constraint 
can be removed in the calculation of the free-energy, up to higher order corrections, and in the 
thermodynamic limit this would yield exactly the dominant contribution in the r.h.s. of \eqref{free 
energy asympt}.

\section{Proofs}
\label{proofs: sec}

In this section we prove Theorem \ref{free energy: teo}. The asymptotics \eqref{free energy asympt} is obtained by comparing suitable upper (Section \ref{proofs ub: sec}) and lower (Section \ref{proofs lb: sec}) bounds to the free energy of the ferromagnetic Heisenberg model. 

\subsection{Lower Bound}
\label{proofs lb: sec}

Here we prove the lower bound to the free energy. The main ingredients in the proof are a localization 
of the energy into Neumann boxes and two simple lower bounds on the localized Hamiltonian: the first one uses the 
energy gap in the box to show that the states with total spin far from the maximum have large energy;
the second one is a rough bound on the contribution from the bosonic interaction part, which will be useful 
only after having restricted the trace to states with 
third component of the total spin close to the minimum value (the proofs of these two estimates are deferred to Appendix \ref{app.tec}).

	\begin{pro}[Free energy lower bound]
		\label{free energy lb: pro}
		\mbox{}	\\
		Assume that $ \beta = \bett S^{-1}  $ as $ S \to \infty $ for some $ \bett > 0 $ constant.	Then one has
		\beq
			\label{free energy lb}
			\frac{f(S,\beta)}S \geq   \frac{1}{\tilde\b} \int\limits_{\mathcal{B}}\frac{\diff\kk}{(2\p)^3}\log \lf( 1 -e^{ - \bett J \eps(\kv)}\ri)  - \OO\lf(S^{-1/8} (\log S)^{1/4}\ri).
		\eeq					
			\end{pro}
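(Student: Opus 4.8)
\emph{Strategy.} Since in finite volume $f=-(\b|\latt|)^{-1}\log\ZZ$, a lower bound on $f$ is equivalent to an upper bound on $\ZZ$, hence to a lower bound on the Hamiltonian $H_\latt^{per}$. My plan is to localize the energy into small cubic boxes, reduce the problem to a single box, and there replace the Heisenberg Hamiltonian by its quadratic (magnon) part up to controlled errors. To that end I partition $\latt$ into disjoint cubic boxes $B$ of side $\ell$ (with $|B|=\ell^3$ and $\ell$ to be fixed at the very end) and discard every nearest-neighbor bond joining two different boxes. Each bond operator $S^2-\spinv_\xv\cdot\spinv_\yv$ is non-negative, its lowest eigenvalue $0$ being attained on the maximally aligned pair, so dropping the inter-box bonds only lowers the energy and leaves inside each box the Heisenberg Hamiltonian $H_B$ with Neumann (free) boundary conditions:
\beq
	H_\latt^{per}\ \geq\ \sum_{B} H_B .
\eeq
Since $A\leq A'$ as operators implies $\tr\,e^{A}\leq\tr\,e^{A'}$ (the ordered eigenvalues being monotone by min-max), the partition function obeys $\ZZ(S,\b,\latt)\leq\prod_B\tr\,e^{-\b H_B}$, and by translation invariance the estimate reduces to a lower bound on the free energy per site of a single box.

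\emph{Reduction to the free gas in a box.} In a box I use the Holstein--Primakoff map and write $\hamb_B=\hamb_{0,B}+\kb_B$ as in \eqref{hamob}, the goal being the operator comparison $\hamb_B\geq\hamb_{0,B}-(\mathrm{err})$ on the thermally relevant sector, from which $\tr\,e^{-\b\hamb_B}\leq e^{\b(\mathrm{err})}\,\tr\,e^{-\b\hamb_{0,B}}$ up to a negligible remainder. Two obstructions must be removed: $\kb_B$ is non-positive and not small in norm, and the bosonic space is truncated by $n_\xv\leq 2S$. I exploit $[\hamb_B,\tspinv]=0$ to split the trace into sectors of fixed total spin. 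The first a priori bound (the gap estimate) shows that sectors whose total spin lies far below the maximal value $|B|S$ carry an energy at least of order $(SJ/\ell^2)$ times the spin deficit; at $\b=\bett/S$ these are Boltzmann-suppressed and contribute negligibly. On the surviving near-maximal sector I further restrict $\tspin^z$ to near its minimum $-|B|S$, i.e.\ to few magnons, where the square-root factors in \eqref{hambh} are regular and the truncation is inactive for $S$ large; there the second a priori bound dominates $\kb_B$ by a term of relative size $1/S$, yielding the desired lower bound.

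\emph{Free magnon gas and optimization.} It remains to compare $\tr\,e^{-\b\hamb_{0,B}}$, taken on the truncated space, with the integral in \eqref{free energy lb}. Diagonalizing $\hamb_{0,B}=SJ\sum_{\kv}\eps(\kv)\,\tilde n_{\kv}$ over the Neumann modes, the only singular contribution is the zero mode $\kv=0$, for which $\eps(\kv)=0$: this is the gapless Goldstone mode, and the divergence of the naive free gas is cut off precisely by the truncation, which caps the zero-mode occupation at order $S|B|$ and produces a finite per-site contribution of order $(\b|B|)^{-1}\log(S|B|)$. The nonzero modes are handled by a Riemann-sum comparison converging to $\frac1{\b}\int_{\mathcal B}\frac{\diff\kv}{(2\pi)^3}\log(1-e^{-\bett J\eps(\kv)})$ with a finite-size error governed by $\ell$. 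Dividing by $S$ and balancing the finite-size error (decreasing in $\ell$) against the zero-mode and interaction errors (increasing in $\ell$ and in $S\ell^3$) fixes $\ell$ as a power of $S$ times a power of $\log S$ and gives the remainder $\OO(S^{-1/8}(\log S)^{1/4})$.

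\emph{Main difficulty.} The crux is the simultaneous control of the non-positive interaction $\kb_B$ and of the truncation, neither of which can be discarded globally: the truncation is exactly what renders the singular zero mode finite and must be retained where the free theory fails, while $\kb_B$ has to be tamed by using the gap to confine the trace to low-magnon-number states. Making the gap estimate quantitative enough that the spin-sector cutoff, the interaction error, the zero-mode term and the finite-size error can all be balanced by a single choice of $\ell$—each being $o(S)$ at the same rate—is the technical heart of the argument, and it is precisely what the two a priori lemmas deferred to the Appendix are designed to provide.
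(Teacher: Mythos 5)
Your proposal follows essentially the same route as the paper's proof: localization into Neumann boxes via positivity of the bond operators, the spin-gap estimate to Boltzmann-suppress sectors with total spin far from maximal, reduction to low-magnon states where the interaction is controlled by the bound $|\kb^N_{\L_1}|\le C\hat N_{\L_1}^2$, the zero-mode cutoff in the free-gas trace, a Riemann-sum comparison, and optimization over $\ell$. The one step you state without justification is the passage from the sector of near-maximal $S_T$ to states with $\hat S^3_T$ near its minimum (i.e.\ few magnons); this rests on the fact that $\mathrm{Tr}_{\HH_{S_T,S_T^3}}e^{-\beta H^N_{\L_1}}$ is independent of $S^3_T$ by rotation invariance, exactly as the paper argues, so the approach is the same.
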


	\begin{proof}
		The first step towards the proof of \eqref{free energy lb} is a localization of the energy into 
small boxes  of side length $ \ell $ with Neumann conditions at the boundary. We partition the big box 
$\L$ into boxes $\L_i$ of side $\ell$ and use the positivity of the bond energy for all the bonds 
connecting a site in $\L_i$ with a site in $\L_j$, $i\neq j$.  Correspondingly, we bound the original 
Hamiltonian $H$ from below as $H\ge \sum_i H^N_{\L_i}$ where $ H_{\L_i}^N $ depends only on 
the degrees of freedom associated with the spins in $ \L_i $ and has free conditions at the boundary:
\be H^N_{\L_i} =\frac{J}2\sum_{\substack{\xx,\yy\in \L_i\\ |\xx-\yy|=1}}(S^2-\hat{\bf S}_\xx\cdot\hat{\bf S}_\yy)\;,\ee
where $|\cdot|$ is the euclidean distance on $\mathbb{Z}^3$.
The Hilbert space on which $ H_{\L_i}^N $ is assume to act (i.e., the restriction of $\HH_\L$ to $\L_i$)
will be denoted by $\HH_{\L_i}$.

Obviously, all the Hamiltonians $H^N_{\L_i}$ commute among each other. Therefore, 
\be  \ZZ(S,\beta,\latt) \leq \Big[ \tr_{\HH_{\L_1}} \lf( \exp \lf\{ - \beta H^N_{\L_1} \ri\} \ri)\Big]^{L^3/\ell^3}=: \ZZ^N(S,\beta,\latt_1)^{L^3/\ell^3},\ee
where the trace $\tr_{\HH_{\L_1}}$ is only over the spin degrees of freedom within $\L_1$.  

In the computation of  $\tr_{\HH_{\L_1}} \exp\{-\b H^N_{\L_1} \}$ we now distinguish between the states with 
total spin close to the maximum, 
from those with ``small spin", which have a big energy and, thus,  give a small contribution
to the free energy. Note that the Hamiltonian $ H_{\L_1}^N $ is invariant under global rotations, that 
is $H_{\L_1}^N$ commutes with the three components of the total spin $\hat{\bf S}_T:=\sum_{\xx\in\L_1}\hat{\bf S}_\xx$ in $\L_1$. Therefore, $ H_{\L_1}^N $ is block diagonal with respect to the 
decomposition $\HH_{\L_1}=\bigoplus_{S_T=0}^{S\ell^3}\bigoplus_{S_T^3=-S_T}^{S_T}\HH_{S_T,S_T^3}$, where $\HH_{S_T,S_T^3}$
is the subspace of $\HH_{\L_1}$ on which $\hat{\bf S}_T^2=S_T(S_T+1)$ and $\hat S^3_T=S^3_T$. On each $\HH_{S_T,S^3_T}$, 
the Hamiltonian can be bounded from below as follows, independently of the value of $S^3_T$ (see Appendix \ref{app.tec} for a proof).

	\begin{pro}[Lower bound on $ H_{\L_1}^N $]
		\label{H lower bound: pro}
		\mbox{}	\\
		There exists a positive constant  $ c > 0 $ such that, if 
		$S - \ell^{-3}S_{T} > 3 J c^{-1} \ell^2$, then 
		\beq
			\label{H lower bound}
			{H_{\L_1}^N} \Big|_{\HH_{S_T,S_T^3}}\geq c \ell S \lf( S - \frac{S_{T}}{\ell^3}
			 \ri).
		\eeq
	\end{pro}
\vskip.2truecm
We now split the trace of interest in two parts:
\beq	\label{step 2}
 \ZZ^N(\beta,S,\Lambda_1) = \sum_{S_{T}\geq S_{\star} \ell^3} \sum_{S_T^3 = - S_T}^{S_T}  {\rm Tr}_{\HH_{S_T,S_T^3}} \exp \lf\{ - \beta H^{N}_{\L_1} \ri\} + \mathcal{R},
		\eeq
		where $ S_{\star} < S $ is a parameter which is going to be fixed later, and 
		the rest $\mathcal{R}$ 
		can be bounded by means of \eqref{H lower bound} as follows:
		\beq
 			\label{rest 1}
			\mathcal{R} = \sum_{S_{T}< S_{\star} \ell^3}  \sum_{S_T^3 = - S_T}^{S_T} {\rm Tr}_{\HH_{S_T,S_T^3}} \exp \lf\{ - \beta H^{N}_{\L_1} \ri\} \le (2S +1)^{\ell^3} e^{ - c \bett \ell S \lf( 1 - \frac{S_{\star}}{S} \ri)},
		\eeq
		provided that $S-S_*>3Jc^{-1}\ell^2$,
		which can be satisfied by picking
\be \label{Sstar choice}
			\frac{S_{\star}}{S} = 1 - \frac{2 \ell^2 \log(2S+1)}{c \tilde\b S}
			\ee
and $\ell,S$ large enough, so that $\mathcal{R}\le (2S+1)^{-\ell^3} \ll\OO(1)$.

We are now left with the main contribution to the trace (the one involving $S_{T}\ge S_*\ell^3$), to be called 
$\tilde{\ZZ}(\beta,S,\Lambda_1)$. 
  By using once again the fact that $H^N_{\L_1}$ commutes with $\hat{\bf S}_T$, we find that 
${\rm Tr}_{\HH_{S_T,S_T^3}} \exp \big\{- \beta H^N_{\L_1} \big\}$ is {\it independent}\footnote{
This can be proved as follows. Let us indicate by $\xx_1,\ldots, \xx_{\ell^3}$ the sites of $\L_1$ 
labeled in lexicographic order. By the theory of the composition of angular momenta, 
a bona fide 
basis for $\HH_{\L_1}$ is provided by the common eigenvectors of $(\hat{\bf S}_{\xx_1}+\hat{\bf S}_{\xx_2})^2$, $(\hat{\bf S}_{\xx_1}+\hat{\bf S}_{\xx_2}+\hat{\bf S}_{\xx_3})^2$, $\ldots$, 
$(\hat{\bf S}_{\xx_1}+\cdots+\hat{\bf S}_{\xx_{\ell^3-1}})^2$, $\hat{\bf S}_T^2$, $\hat S^3_T$. In other
words, the eigenvalues of $(\hat{\bf S}_{\xx_1}+\hat{\bf S}_{\xx_2})^2$, $(\hat{\bf S}_{\xx_1}+\hat{\bf S}_{\xx_2}+\hat{\bf S}_{\xx_3})^2$, $\ldots$, 
$(\hat{\bf S}_{\xx_1}+\cdots+\hat{\bf S}_{\xx_{\ell^3-1}})^2$ can be used as good quantum numbers 
for classifying the states of $\HH_{S_T,S_T^3}$. Note that the operators associated with these quantum numbers are all scalars, i.e., they commute with the three components of $\hat {\bf S}_T$:
therefore, the eigenvectors of $H_{\L_1}^N$ on $\HH_{S_T,S_T^3}$ are invariant under the action 
of $\hat{\bf S}_T$, which implies in particular that ${\rm Tr}_{\HH_{S_T,S_T^3}} \exp\{- \beta H^N_{\L_1} \}$ is independent of $S^3_T$.} of $S^3_T$, so that 
\be \label{step 3}\tilde{\ZZ}(\beta,S,\Lambda_1)=
\sum_{S_{T} = S_{\star}\ell^3}^{S\ell^3}(2S_T+1)
{\rm Tr}_{\HH_{S_T,-S_T}}\exp \lf\{ - \beta H^{N}_{\L_1} \ri\}. \ee
We can now apply the boson representation given in Eq.\eqref{ax upx}, which implies
\be \label{step 4}\tilde{\ZZ}(\beta,S,\Lambda_1) \leq   \lf(2S\ell^3 + 1\ri)  \sum_{\substack{\{ n_{\xv} \}, \xv \in \L_1 \:\\
 \sum \nx \leq \lf(S - S_{\star}\ri)\ell^3}} \expvalt{\{ n_{\xv} \}}{\{ n_{\xv} \}}{\exp\lf\{ - \beta \hamb^N_{\L_1} \ri\}},
		\ee
where $\hamb^N_{\L_1}$ is the bosonic hamiltonian in $\L_1$ with Neumann boundary conditions 
(i.e., it is given in Eq. \eqref{hambh} with $\L$ replaced by $\L_1$ and the condition $|\!|\xx-\yy|\!|_\L=1$ replaced by $|\xx-\yy|=1$). Inspired by 
Eqs. \eqref{hambh}--\eqref{hamob}, we rewrite $\hamb^N_{\L_1}=\hamb^N_{0,\L_1}+ \kb^N_{\L_1} $,
with $\hamb^N_{0,\L_1}=JS\sum_{\media{\xx,\yy}\subset \L_1}(a^\dagger_\xx-a^\dagger_\yy)
(a_\xx-a_\yy)$ and $ \kb^N_{\L_1} $ the interaction part, which can be bounded as follows (see Appendix \ref{app.tec} for a proof).
		
\begin{pro}[Estimate of $ \kb^N_{\L_1} $]
\label{est kb: pro}
\mbox{}	\\
There exists a finite constant $ C $ such that
\beq 	\label{kb est}\lf|  \kb^{N}_{\L_1} \ri| \leq  C \hat{N}_{\L_1}^2,\eeq	
where $ \hat{N}_{\L_1} : = \sum_{\xv \in \latt_1} \hnx $.
\end{pro}

Using this estimate in (\ref{step 4}) together with the fact that 
$\hat N_{\L_1}\le (S-S_*) \ell^3=(2/c\tilde\b)\ell^5\log(2S+1)$, we get for a suitable constant $C'$
\beq	\label{step 5}
\tilde{\ZZ}(\beta,S,\Lambda_1) \leq \lf(2S\ell^3 + 1\ri)e^{C'\ell^{10}S^{-1}\log^2 S} \tilde{\ZZ}_0(\beta,S,\Lambda_1),
		\eeq
		where $ \tilde{\ZZ}_0(\beta,S,\Lambda_1) $ stands for the partition functions of a free Bose gas in a box $ \Lambda_1 $ with free conditions at the boundary and constraint on the total number of particles $ N_{\L_1} \leq \lf(S - S_{\star}\ri) \ell^3 $. 

The estimate of $ \tilde{\ZZ}_0 $ is very simple: we rewrite the partition function in terms of the Neumann Fourier modes $ \kv \in \latt_{1,N}^*$ (see Appendix \ref{app.tec})
and drop the constraint on the total particle number except for the mode $ \kv = {\bf 0} $ and obtain
\be
\label{step 6}
\tilde{\ZZ}_0(\beta,S,\Lambda_1) \leq \sum_{\substack{\{ \tnk \},\ \kv \in  \latt_{1,N}^* \\ 
\no \leq \lf(S - S_{\star}\ri) \ell^3}} \exp \bigg\{ - \bett J \sum_{\kv \in \latt_{1,N}^*} \eps(\kv) \tnk \bigg\}= 
\lf(S - S_{\star} \ri) \ell^3 \prod_{\substack{\kv \in \Lambda_{1,N}^*\\ \kv \neq 0}} \frac{1}{1 - \exp\big\{ - \bett \eps(\kv) \big\}}.\ee
Putting together all the estimates, we obtain
\be\frac{f(\beta,S,\Lambda)}{S} \geq \frac1{\ell^3\tilde\b}\sum_{\kk\neq{\bf 0}}\log \lf(1-e^{-\tilde\b J\e(\kk)} \ri)
-({\rm const.})\lf[ \frac{\ell^{7} (\log S)^2}{S} +\frac{\log(\ell^3 S)}{\ell^3}\ri]\;.\eeq
Finally, replacing the Riemann sum in the r.h.s. by the corresponding integral, we get 
\be \frac{f(\beta,S,\Lambda)}{S} \geq \frac1{\tilde\b}\int_{\mathcal{B}}\frac{\diff \kk}{(2\p)^3}
\log \lf(1-e^{-\tilde\b J\e(\kk)} \ri)
-({\rm const.})\lf[ \frac{\ell^{7} (\log S)^2}{S} +\frac{\log(\ell^3 S)}{\ell^3}+\frac1{\ell}\ri]\;.\ee
Optimizing over $ \ell $ yields $ \ell = S^{1/8} (\log S)^{-1/4} $ and the desired lower bound on the 
specific free energy. 
	\end{proof}

\subsection{Upper Bound}
\label{proofs ub: sec}

In this section we complete the proof of Theorem \ref{free energy: teo} by deriving a suitable (rough) upper bound to the free energy. The most relevant steps in the proof are an energy localization into Dirichlet boxes, the introduction of a fictitious magnetic field (which will be removed at the end) to take into account the constraint on the bosonic Hilbert space and the estimate proven in Proposition \ref{est kb: pro} on the interaction. A finer upper bound, supposedly optimal up to 
corrections of the order $1/S$ included, will be presented elsewhere \cite{CGS}.  

	\begin{pro}[Free energy upper bound]
		\label{free energy ub: pro}
		\mbox{}	\\
		Assume that $ \beta = \bett S^{-1}  $ as $ S \to \infty $ for some $ \bett> 0 $ constant.	Then one has
\beq
\label{free energy ub}
\frac{f(S,\beta)}S \leq   \frac{1}{\tilde\b} \int\limits_{\mathcal{B}}\frac{\diff\kk}{(2\p)^3}\log \lf( 1 -e^{ - \bett J \eps(\kv)} \ri)  + \OO \lf(S^{-1/6} (\log S)^{2/3} \ri) .
\eeq	
		
	\end{pro}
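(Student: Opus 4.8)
The plan is to derive the upper bound on $f$ as a lower bound on the partition function, obtained variationally rather than by operator localization: since the bond energies $S^2-\spinv_\xv\cdot\spinv_\yv$ are nonnegative, dropping the inter-box bonds can only \emph{raise} $H_\Lambda^{per}$ and is therefore useless here. Instead I would invoke the Gibbs variational principle
\[
f(S,\beta,\Lambda)\,\le\,\frac{1}{|\Lambda|}\Big[\tr\big(H_\Lambda^{per}\,\Gamma\big)+\tfrac{1}{\beta}\tr\big(\Gamma\log\Gamma\big)\Big],
\]
valid for every density matrix $\Gamma$ on $\HH_\Lambda$, and choose a product trial state $\Gamma=\bigotimes_i\Gamma_i$ over a partition of $\Lambda$ into cubic boxes $\Lambda_i$ of side $\ell$, with each $\Gamma_i$ a density matrix on the truncated box space $\HH_{\Lambda_i}$. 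The entropy then splits as a sum over boxes, while $\tr(H_\Lambda^{per}\Gamma)$ separates into an intra-box part and an inter-box part; since the $\Gamma_i$ will be chosen close to the magnon vacuum (fully polarized state), in which every bond energy vanishes, the inter-box contribution is controlled by the surface-to-volume ratio times the mean magnon density and produces a correction $\OO(\ell^{-1})$ in $f/S$.

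Within each box I would take $\Gamma_i$ to be the (normalized) restriction to $\{n_\xv\le 2S\}$ of the quasi-free Gibbs state of the \emph{free} magnon Hamiltonian $\hamb^D_{0,\Lambda_i}$ with Dirichlet boundary conditions, regularized by a fictitious magnetic field $h>0$ coupling to $\hat N_{\Lambda_i}=\sum_{\xv\in\Lambda_i}\hnx$; that is, $\Gamma_i\propto P\exp\{-\beta(\hamb^D_{0,\Lambda_i}+h\hat N_{\Lambda_i})\}P$ with $P$ the projector onto the truncated subspace. Diagonalizing $\hamb^D_{0,\Lambda_i}$ over the Dirichlet Fourier modes, the quadratic part of the energy together with the entropy reproduces the free Bose gas free energy $\tfrac{1}{\beta}\sum_{\kv}\log(1-e^{-\beta(SJ\eps(\kv)+h)})$ per box, whose Riemann sum converges to $\tfrac{1}{\bett}\int_{\mathcal B}\tfrac{\diff\kv}{(2\pi)^3}\log(1-e^{-\bett J\eps(\kv)})$ with an error $\OO(\ell^{-1})$, plus an infrared cost of the Dirichlet gap of order $\ell^{-3}\log S$. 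The field $h$ has the sole purpose of keeping the site occupations small, so that the truncation $n_\xv\le 2S$ is respected with overwhelming probability and the projection $P$ costs a negligible amount of free energy; it is sent to zero at the end, producing an error $\OO(h/S)$.

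It then remains to estimate the interaction energy $\tr(\kb^D_{\Lambda_i}\Gamma_i)$, which by Proposition \ref{est kb: pro} is bounded by $C\langle\hat N_{\Lambda_i}^2\rangle$; since the field and the constraint keep $\hat N_{\Lambda_i}$ under control, this yields a relative error of the form $(\mathrm{const.})\,\ell^{a}(\log S)^{b}S^{-1}$ for suitable exponents. Collecting the localization error $\OO(\ell^{-1})$, the Riemann-sum and Dirichlet-gap errors, the field-removal error $\OO(h/S)$ and the interaction error, and optimizing jointly over $\ell$ and $h$, one is led to the choice $\ell\sim S^{1/6}(\log S)^{-2/3}$ and the stated remainder $\OO(S^{-1/6}(\log S)^{2/3})$.

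The hard part will be the treatment of the truncation constraint $n_\xv\le 2S$ inside the trial state. A bona fide quasi-free Gibbs state lives on the full bosonic Fock space and charges configurations with $n_\xv>2S$; projecting onto the physical subspace $\HH_{\Lambda_i}$ alters both the energy and, more delicately, the entropy, and one must show that this projection is essentially free of cost. The fictitious field $h$ is precisely the device that makes the occupation tail small enough for the projection to be harmless, but balancing $h$ against the free-energy shift it introduces, while simultaneously keeping $\langle\hat N_{\Lambda_i}^2\rangle$ and the localization errors under control, is the crux of the argument and is what fixes the final exponent.
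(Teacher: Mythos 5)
Your overall strategy coincides with the paper's in all its essential ingredients: a fictitious field $h$ coupled to the number operator, localization into Dirichlet boxes, the bound of Proposition \ref{est kb: pro} for the quartic part, and the same final optimization $\ell\sim S^{1/6}(\log S)^{-2/3}$, $h\sim S^{5/6}(\log S)^{2/3}$. Two implementation choices differ. (i) You tile $\Lambda$ with boxes and estimate the inter-box bonds by the surface magnon density; the paper instead freezes a corridor $\mathcal{C}$ of width one between the boxes in the fully polarized state $|\!\!\downarrow_{\mathcal{C}}\rangle$, so that the inter-box bonds reduce \emph{exactly} to the Dirichlet boundary term $J\sum_{\xv\in\partial\Lambda_i}(S^2+S\hat{S}^3_{\xv})$, which is then absorbed into the quadratic Hamiltonian and diagonalized by the sine modes. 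Your $\OO(\ell^{-1})$ estimate is also viable, but it silently uses that $\langle \hat{S}^{\pm}_{\xv}\rangle_{\Gamma_i}=0$, i.e., that each $\Gamma_i$ is number-conserving; this should be stated. (ii) You invoke the Gibbs variational principle with a projected quasi-free density matrix; the paper instead lower bounds $\ZZ_h$ by restricting the trace to the product eigenbasis of $\sum_i H^D_{h,\Lambda_i}$ (tensored with $|\!\!\downarrow_{\mathcal{C}}\rangle$) and applying Jensen's inequality $\langle\psi|e^{-\beta H_h}|\psi\rangle\ge e^{-\beta\langle\psi|H_h|\psi\rangle}$, which factorizes $\ZZ_h\ge\prod_i\ZZ^D_h(\beta,S,\Lambda_i)$ with no entropy ever computed.

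Point (ii) is where your proposal has a genuine gap, which you correctly identify as the crux but do not resolve. The projector $P=\one(n_{\xv}\le 2S\ \forall\xv)$ does not commute with $\hamb^D_{0,h,\Lambda_i}$ (the hopping terms $a^{\dagger}_{\xv}a_{\yv}$ do not preserve single-site occupations), so $Pe^{-\beta \hamb^D_{0,h,\Lambda_i}}P$ is not the Gibbs state of any explicit Hamiltonian; neither $\tr(\Gamma_i\log\Gamma_i)$ nor $\tr(\hamb^D_{0,h,\Lambda_i}\Gamma_i)$ can be read off the Dirichlet modes, and showing that the projection costs $o(1)$ in the free energy per site is precisely the estimate your argument is missing. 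The paper's formulation makes this issue disappear: the truncated Hilbert space is what one traces over from the start, the box partition function is split by inserting $\one(\sum_{\xv}\hnx\le\bar N)=1-\one(\sum_{\xv}\hnx>\bar N)$ with $\bar N\ll S$ (so $n_{\xv}\le 2S$ is automatic on the retained states), the interaction is bounded there by $C\bar N^2$, the indicator is then dropped, and the resulting tail is bounded by $e^{-\frac14\bett S^{-1}h\bar N}$ using only the field term --- all manipulations on traces of positive operators, exploiting $[\hat{N}_{\Lambda_1},\hamb^D_{0,h,\Lambda_1}]=0$. If you wish to keep the variational language, take $\Gamma_i$ diagonal in the eigenbasis of the full $H^D_{h,\Lambda_i}$ on $\HH_{\Lambda_i}$, for which the entropy term is exactly computable, and then bound the resulting box free energy as above.
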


	\begin{proof}
We start by adding an external magnetic field $ h > 0$ to the Hamiltonian: exploiting the 
positivity of $ \spin_{\xv}^z + S $, one has the trivial inequality
\be \ZZ(\beta,S,\Lambda) \geq \ZZ_h(\beta,S,\Lambda):= \tr_{\HH_{\L}} \lf( \exp \lf\{ - \beta H_{h} \ri\} \ri)\ee
for any $ h > 0 $, where $H_h=H+h\sum_{\xx\in\L}(\hat S^3_\xx+S)$. We fix $\ell\in\mathbb{N}$ and 
define the 
corridor $\mathcal C$ of width $1$ as the minimal connected set on $\L$ that contains $\{\xx\in\L: 
\xx={\bf n}\ell, {\bf n}\in\mathbb{Z}^3\}$ (here we assume for simplicity that $L$ is divisible by $\ell$).
Note that $\L\setminus \mathcal{C}$ is a union 
of boxes $\L_i$ of side length $\ell-1$. We localize the energy in the boxes $ \Lambda_i $ with Dirichlet boundary conditions, by proceeding as follows. Let us denote by $\HH_{ \mathcal{C}}$ 
and $\HH_{\L\setminus \mathcal{C}}$ the Hilbert spaces generated by the spins in $\mathcal{C}$ and $\L\setminus \mathcal{C}$, respectively. Note that, if $|\!\!\downarrow_{\mathcal{C}}\rangle$ 
is the state of $\HH_{ \mathcal{C}}$ such that $S^3_{\xv} = 
-S, \forall \xv \in \mathcal{C}$, and $|\psi\rangle$ is a generic state of $\HH_{\L\setminus \mathcal{C}}$,
then $\big(\langle \downarrow_{\mathcal{C}}\!\!|\otimes\langle\psi|\big)H_h\big(|\!\!\downarrow_{\mathcal{C}}\rangle\otimes|\psi\rangle\big)=\expval{\psi}{\psi}{\sum_iH_{h,\L_i}^D}$, where
\be H^D_{h,\L_i} =H_{\L_i}^N+h\sum_{\xx\in\L_i}(\hat S^3_\xx+S)+J
\sum_{\xx\in\dpr\L_i}(S^2+S\hat S^3_\xx)\;,\ee
and the first sum runs over pairs of nearest neighbor sites, both belonging to $\L_i$, while the second 
term is the boundary contribution, which should be thought of as an interaction term between the 
sites at the boundary of $
\L_i$ with the neighboring sites in the corridor $\mathcal{C}$. Obviously, the Hamiltonians 
$H^D_{\L_i} $ all commute among each other and can be naturally thought as operators on $\HH_{\L_i}$. We also denote by $E_i^D$ and $|E_i^D\rangle$ the eigenvalues and eigenvectors of $H^D_{\L_i}$ and set $ \ket{ \{ E_i^D \}} :=
\bigotimes_{i=1}^{L^3/\ell^3} \ket{E_i^D}$.  Given these definitions, we note that
\bea \label{ub est 1}
\mathcal{Z}_{h}(\beta,S,\Lambda) &\geq& \sum_{ \{ E_i^D \}} \big(\langle \downarrow_{\mathcal{C}}\!\!|\otimes \big\langle \{ E_i^D \} \big| \big)e^{-\b H_h}\big(|\!\!\downarrow_{\mathcal{C}}\rangle\otimes \big|
 \{ E_i^D \} \big\rangle \big)\nonumber\\
&\geq& \sum_{ \{ E_i^D \}} \exp\Big\{-\b \big(\langle \downarrow_{\mathcal{C}}\!\!|\otimes\big\langle \{ E_i^D \} \big|\big)H_h\big(|\!\!\downarrow_{\mathcal{C}}\rangle\otimes\big|
 \{ E_i^D \} \big\rangle\big)\Big\}\nonumber\\
&=& \prod_{i = 1}^{L^3/\ell^3} \sum_{ E_i^{D} }e^{-\b E_i^D}  = : 
\lf( \mathcal{Z}^{D}_{h}(\beta,S,\Lambda_1) \ri)^{L^3/\ell^3},\nonumber\eea
where, in order to go from the first to the second line, we used Jensen inequality to lift the expectation value to the exponent, while to go from the second to the third we used that the expression 
in braces is equal to $-\b \expval{\{ E_i^D \}}{\{ E_i^D \}}{\sum_iH_{\L_i}^D}$. 

Now we use the bosonic representation, drop from the  partition function 
$ \mathcal{Z}^{D}_{h}(\beta,S,\Lambda_1) $ the contribution from large total occupation number 
$ N_{\L_1}$, estimate the interaction $ \kb_{\L_1}^D $ and finally restore the missing part of the 
partition function. To this purpose we introduce a new parameter $ \bar{N} \ll S $, which is going to be 
chosen later and notice that the bosonic analogue of $ H_{h, \Lambda_1}^{D} $ is 
$ \mathcal{H}_{h,\L_1}^D=\mathcal{H}_{\L_1}^N
+h\sum_{\xx\in\L_1}n_\xx+SJ\sum_{\xx\in\dpr\L_1}n_\xx$. Now, in analogy with the case
of Neumann boundary conditions, we can rewrite $ \mathcal{H}_{h,\L_1}^D= \mathcal{H}_{0,h,\L_1}^D+ \mathcal{K}_{h,\L_1}^D$, where  $\mathcal{H}_{0,h,\L_1}^D$ stands for the quadratic part of 
$ \mathcal{H}_{h,\L_1}^D$, while $ \mathcal{K}_{h,\L_1}^D$ is at least quartic in the Bose operators
and can be bounded exactly as in Eq.(\ref{kb est}). Therefore,
\beq\label{stepp 3}
\mathcal{Z}^{D}_{h}(\beta,S,\Lambda_1) 
\geq \exp \lf\{ - \frac{C \tilde\b\bar{N}^2}{S} \ri\} \tr_{\hilb_{\Lambda_1}^{0}} \lf[ \one\Big(\sum_{\xx\in\L_1} \nx \leq \bar{N}\Big) \exp 
\lf\{ - \beta \hambd_{0,h,\L_1}\ri\} \ri]\eeq
where $\hilb_{\Lambda_1}^{0}$ is the unconstrained bosonic Hilbert space on $\L_1$, i.e. 
the Bose particle number $n_\xx$ is unbounded on $\hilb_{\Lambda_1}^{0}$:
note that the removal of the constraint is irrelevant, because $ \bar{N} \ll S $ and, therefore, 
the condition that $n_\xx\le 2S$ is automatically satisfied.  
Now we rewrite $ \one\lf(\sum_{\xx\in\L_1} \nx \leq \bar{N}\ri)=1- \one\lf(\sum_{\xx\in\L_1} \nx >\bar{N}\ri)$ and bound the contribution from the states with more than $\bar N$ particles as
\bea&&	\label{remainder 1}
\sum_{\substack{\{ \nx \},\ \xv \in \latt_1\\\sum \nx > \bar{N}}} \expvalt{\{ n_{\xv} \}}{\{ n_{\xv} \}}{
\exp \lf\{ - \beta \hambd_{0,h,\L_1} \ri\}}  \le\sum_{\substack{\{ \nx \},\ \xv \in \latt_1\\\sum \nx > \bar{N}}} \Big( \{ n_{\xv} \} \Big|
\exp \Big\{ - \tilde\beta S^{-1}h\sum_{\xx\in\L_1}\hat n_\xx \Big\} \Big| \{ n_{\xv} \}  \Big)
\nonumber	\\
&&\leq e^{-\frac12\tilde\b S^{-1}h\bar N}\prod_{\xx\in\L_1}\sum_{n_\xx\ge 0}e^{-\tilde\b S^{-1}h n_\xx}
\leq e^{-\frac14\tilde\b S^{-1}h\bar N}, \eea
provided that $h/S\ll1$ and $S^{-1} h \bar{N} \gg \ell^3 \log(S/h) $.
The estimate \eqref{stepp 3} thus becomes
\beq	\label{stepp 4}
\mathcal{Z}^{D}_{h}(\beta,S,\Lambda_1)
\geq  e^{-C\tilde\b\bar N^2/S} \lf(1- e^{-\frac14\tilde\b S^{-1}h\bar N} \ri)
\tr_{\hilb^0_{\L_1}} \lf(\exp \lf\{ - \beta \hambd_{0,h,\L_1} \ri\}  \ri)\;.
\eeq
In order to compute $\tr_{\hilb_{\L_1}} \lf(\exp \lf\{ - \beta \hambd_{0,h,\L_1} \ri\}  \ri)$ we go to Fourier space: assuming for definiteness that $\L_1=\{\xx \in \Z^3 \: : \: x_i=1,\ldots,\ell-1\}$, we rewrite 
$ H_{0,h,\L_1}^D$ by applying the Fourier transform $\tilde a^\pm_\kk=\sum_{\xx\in \L_1}\varphi_\kk(\xx)a^\pm_\xx$, where 
$\varphi_\kk(\xx)=\prod_{i=1}^3\varphi_{k_i}(x_i)$,
$\varphi_k(x)=[2/(\ell-1)]^{1/2}\sin(kx)$
and the set of momenta $\L^*_{1,D}$ is $\L^*_{1,D}=\lf\{\tx\frac\p{\ell}{\bf n}, {\bf n} \in \Z^3 \: :\: n_i=1,\ldots,\ell-1\ri\}$. After Fourier transform we get
\be  \hambd_{0,h,\L_1}=SJ\sum_{\kk\in\L^*_{1,D}} \lf(\varepsilon(\kk)+\frac{h}{SJ} \ri)
\tilde a_\kk^\dagger \tilde a_\kk\;,\ee
so that 
\be \tr_{\hilb^0_{\L_1}} \lf(\exp \lf\{ - \beta \hambd_{0,h,\L_1} \ri\}  \ri)=\prod_{\kk\in\L^*_{1,D}}
\frac1{1-e^{-\tilde\b J(\varepsilon(\kk)+h/SJ)}}\;.\ee
Putting all the estimates together gives
\bea	&&\label{stepp 5}
- \frac{1}{|\Lambda_1|} \log\mathcal{Z}^{D}_{h}(\beta,S,\Lambda_1) \le\nonumber\\
&&\le \frac1{\ell^3} \sum_{\kv \in \latt_1^*, \kv \neq 0 } \log \lf(1 - e^{ - \bett J( \eps(\kv)+\frac{h}{SJ})} \ri) +({\rm const.}) \lf[\frac{\bar N^2}{\ell^3S} - \frac{1}{\ell^3} \log
\lf(1- e^{-\frac14\tilde\b S^{-1}h\bar N} \ri) \ri]\le\\
&&\le \int_{\mathcal{B}}\frac{\diff\kk}{(2\p)^3}\log \lf(1 - e^{ - \bett J\eps(\kv)} \ri)
+({\rm const.}) \lf[\frac{\bar N^2}{\ell^3S} - \frac{1}{\ell^3} \log
\lf(1- e^{-\frac14\tilde\b S^{-1}h\bar N} \ri)+\frac{h}{S}+\frac1{\ell} \ri]
\eea
To satisfy the condition  $S^{-1} h \bar{N} \gg \ell^3 \log(S/h) $ met during the proof  we can 
pick $ h = \bar{N}^{-1} S \ell^3 (\log S)^2 $. Optimizing over $\bar N$ and $\ell$ yields
\be \bar{N}  =CS^{2/3} 
(\log S)^{-2/3}\;,\qquad 	\ell =C S^{1/6} (\log S)^{-2/3} \;,\qquad  h =C S^{5/6} (\log S)^{2/3} \ll S\;,
\ee
for a suitable constant $C$. The corresponding relative error term in the energy is $ \OO(S^{-1/6}\log^{2/3}S)$, which proves the proposition. 
	\end{proof}

\appendix
\section{Two Technical Bounds}\label{app.tec}

In this appendix we prove Propositions \ref{H lower bound: pro} and \ref{est kb: pro}.

	\begin{proof}[Proof of Proposition 2.2]
	Let us assume for definiteness that $\L_1=\{\xx \in \Z^3 \: : \: x_i=1,\ldots,\ell\}$.
	We rewrite $ H_{\L_1}^N $ by applying the Fourier transform
	$
		\label{neumann fourier} \fspinv_\kk=\sum_{\xx\in \L_1}\phi_\kk(\xx)\hat{\bf  S}_\xx,
	$
	where $\phi_\kk(\xx)=\prod_{i=1}^3\phi_{k_i}(x_i)$,
	\be
		\label{neumann eigenf} \phi_k(x)=\begin{cases} \ell^{-1/2}, & {\rm if} \ k=0, \\ (2/\ell)^{1/2}\cos(k(x-\frac12)), & {\rm if}\ k\neq 0,
\end{cases}
	\ee
	and the set of momenta $\L^*_{1,N}$ is 
	\beq
		\label{neumann kappi}
		\L^*_{1,N}=\lf\{\tx\frac\p{\ell}{\bf n}, {\bf n} \in \Z^3 \: :\: n_i=0,\ldots,\ell-1\ri\}.
	\eeq
		Then a straightforward computation shows that, if $\e(\kk)=\sum_{i=1}^3(1-\cos k_i)$,  
	\beq H^N_{\L_1} 	=- 3SJ(\ell^3-\ell^2)+ J\sum_{\kk\in\L^*_{1,N}}\e(\kk)\fspinv_\kk\cdot\fspinv_\kk.\eeq
Now, for every $\kk\neq{\bf 0}$, one has $\e(\kk)\ge c_0\ell^{-2}$, for a suitable $c_0>0$. Therefore,
\be  H^N_{\L_1}\Big|_{\HH_{S_T}}\ge- 3SJ\ell^3+ J\frac{c_0}{\ell^2}\sum_{\kk\neq{\bf 0}}\fspinv_\kk\cdot\fspinv_\kk\Big|_{\HH_{S_T}}
=- 3SJ\ell^3+ J\frac{c_0}{\ell^2} \lf[S(S+1)\ell^3-\frac{S_T}{\ell^3}(S_T+1) \ri]\;,\label{B4}\ee
where in the last inequality we used Plancherel's identity 
$\sum_{\kv\in\latt_{1,N}^*} \fspinv_{\kv}^2 = \sum_{\xv \in \latt_1} \spinv_{\xv}^2 = S(S+1)\ell^3$
and the definition of total spin operator $ \tspinv = \ell^{3/2} \fspinv_0 $, from which $\fspinv_0^2 = \ell^{-3} S_T(S_T+1)$ on $\HH_{S_T}$. Now, the expression in square brackets in the r.h.s. of 
Eq. \eqref{B4}  can be bounded from below by $S\ell^3(S-S_T/\ell^3)$, so that 
\be H^N_{\L_1}\Big|_{\HH_{S_T}}\ge - 3SJ\ell^3+ c_0JS\ell \lf(S-\frac{S_T}{\ell^3} \ri)\;,\ee
Therefore, for all $ S_T $ such that $ S - \ell^{-3}S_T  > 6c_0^{-1}\ell^2 $,	we get $ H^N_{\L_1} \Big|_{\HH_{S_T}}\geq \frac{c_0}{2}J\ell S (S - \ell^{-3}S_T ) $, which proves the proposition with $ c : = \frac12J c_0 $.
	\end{proof}

	\begin{proof}[Proof of Proposition 2.3]
		We start by investigating the part of $ \kb^N_{\L_1} $ associated with the square roots in the first term in \eqref{hambh}. Using the fact that $A_{\xx,\yy}:= 1 -  \sqrt{1 - \frac{\hnx}{2S}} \sqrt{1 - \frac{\hn_{\yv}}{2S} } $ is a non-negative operator on the bosonic
Hilbert space of interest, we get:
\bea  2SJ\Big| \sum_{\media{\xx,\yy}\subset \L_1} \upx A_{\xx,\yy} a_\yy\Big|&=&
2SJ\Big| \sum_{\media{\xx,\yy}\subset \L_1} \upx A_{\xx,\yy}^{1/2}A_{\xx,\yy}^{1/2} a_\yy\Big|
\le
2SJ \sum_{\media{\xx,\yy}\subset \L_1} \upx A_{\xx,\yy} a_\xx\le\nonumber\\
&\le&  J
 \sum_{\media{\xx,\yy}\subset \L_1} \upx (n_\xx+n_\yy) a_\xx\le ({\rm const.}) \bigg( \sum_{\xv \in \latt_1} \hnx \bigg)^2\;,\eea
where in the first line we used the Cauchy-Schwarz inequality, while to go from the first to the second 
line we used the simple fact that $1-\sqrt{1-x}\le x$, $\forall x\in[0,1]$. Finally, the 
remaining term in $ \kb^N_{\L_1} $ can be bounded trivially as $J\lf| \sum_{\media{\xx,\yy}\subset \L_1} n_\xx n_\yy\ri|\le ({\rm const.}) \lf( \sum_{\xv \in \latt_1} \hnx \ri)^2$, which implies
the desired estimate.
\end{proof}
\vskip.5truecm
\noindent
{\bf Acknowledgements.} We gratefully acknowledge  financial support from the
ERC Starting Grant CoMBoS-239694. We thank E.H. Lieb, B. Nachtergaele, R. Seiringer and J.-P. Solovej for 
useful comments and discussions.

\end{document}